\documentclass[a4paper, 11pt]{article}
 \usepackage{layout}
 \usepackage[dvipdfmx, dvips]{graphicx}
 \usepackage{chngcntr}
\usepackage{mathpazo}

\usepackage{amsmath, amssymb,bm,}
\usepackage{booktabs}
\usepackage{textcomp} 
\usepackage{ascmac} %
\usepackage[onehalfspacing]{setspace}


 \newtheorem{theorem}{Theorem}
\newtheorem{definition}{Definition}
\newtheorem{assumption}{Assumption}
\newtheorem{proposition}{Proposition} 
\newtheorem{lemma}{Lemma}
\newtheorem{corollary}{Corollary}
\newenvironment{proof}{\noindent \textbf{Proof.}}{\hfill$\blacksquare$\\[3pt]}

\begin{document}

\title{A Model of Financial Market Control} 

\author{OHASHI Yoshihiro\\
College of Economics, Nihon University\\
1-3-2 Kanda misaki-cho, Chiyoda-ku, Tokyo, 101-8360 Japan\\
e-mail: ohashi.yoshihiro@nihon-u.ac.jp\\
https://sites.google.com/view/yohashi/}
 
\maketitle

\begin{abstract}
This study investigates the prevention of market manipulation
 using a price-impact model of financial market trading as a linear system.
 First, I define a trading game between speculators 
 such that they implement a manipulation trading strategy that exploits momentum traders.
Second, I identify market intervention by a controller (e.g., a central bank) with 
a control of the system. 
The main result shows that there is a control strategy that prevents 
market manipulation 
as a subgame perfect equilibrium outcome of the trading game.
On the equilibrium path, no intervention is realized.  
This study also characterizes the set of 
manipulation-proof linear pricing rules of the system.
The set is very restrictive if there is no control, while the presence 
of control drastically expands the set. 
  \end{abstract}

Keywords:price impact; market intervention;
momentum; positive feedback; control;  difference equations
\\
\textit{Journal of Economic Literature} Classification Numbers: G12, D49\\

\section{Introduction} 
This study analyzes the prevention of \textit{market manipulation}
 by a crowd of speculators who exploit \textit{momentum traders}. 
 Momentum traders follow a price trend, while speculators intentionally 
 cause a price trend and ``ignite'' the momentum. 
 Then, speculators buy and sell a security and obtain a profit margin. 
 This research focuses 
 on the question of whether a market system  can 
 spontaneously prevent this manipulation.

 To address this question, I investigate a \textit{price-impact} model 
 of financial market trading. 
 Price-impact models have been investigated in the literature of price formation (e.g., Kyle, 1985)
  or optimal execution (e.g., Almgren and Chriss, 2000).
 The  novel feature of this study is that it models the market system as a 
 \textit{linear system} and introduces a \textit{control} of the system.

 The trading model is based on Kyle (1985) and Huberman and Stanzl (2004).
 There are three kinds of participants in the market system: speculators,
 momentum traders, and a controller (e.g., the central bank 
 or government).  
  In each period $n$, customers simultaneously place a market order. 
  The market price in period $n$ is determined by a \textit{pricing rule}, which is a mapping from an aggregate market order to a non-negative price. 
  Each customer can trade at the market price.

A {pricing rule} of the system is said to be \textit{viable} (i.e., manipulation-proof) if it prevents speculators from implementing speculative trading strategies in some solution concept. 
I characterize the sets of viable pricing rules in the Nash equilibrium (NE) and subgame perfect equilibrium (SPE), which I refer to as \textit{NE-viable} pricing rules and \textit{SPE-viable} pricing rules, respectively, with or without a controller.

  The model of momentum traders is the same as that of 
\textit{positive feedback traders} in De Long et al. (1990).\footnote{
Hong and Stein (1999) analyze a model introducing momentum traders.
 Jegadeesh and Titman (1993) document the significance of momentum strategies. 
More recent empirical studies report momentum effects. 
See, for example, Moskowitz et al. (2012) and Baltzer et al. (2019).
}
As in their model, I assume that momentum traders are unintelligent agents:  they automatically
 buy a security today whenever they observe a price gain yesterday.
The price gain leads them to purchase the security, which in turn leads to further price gains.
 This self-perpetuating behavior continuously raises market prices.
  Speculators exploit this property of momentum traders and can earn a  profit.
  For example, a speculator buys a unit of the security and sells it when the price rises; when 
  the speculator sells the unit, another speculator simultaneously buys the same unit to 
  cancel out the negative price impact of the sale.

 The central results of this study are as follows. 
 First,  I characterize the set of NE-viable pricing rules  
 and the set of SPE-viable pricing rules in the absence of controls. 
 To compare the result of Huberman and Stanzl (2004), I also characterize the set of viable pricing 
 rules \textit{without} momentum traders \textit{and} a controller, 
 which I simply refer to as the \textit{maximal set}. 
 I find that both the sets of NE-viable and SPE-viable pricing rules
  are very restrictive compared to the {maximal set} (Proposition \ref{p2}). 
 In  particular, the result shows that the ``Kyle- type'' pricing rule, that is,  
 $p_n=p_{n-1}+\lambda_n q_n$ (Kyle (1985)),  is not NE-viable (and hence, not SPE-viable)
  in the absence of controls. 
  Second, I characterize the sets of NE-viable and SPE-viable pricing rules 
  in the presence of controls.
   I find that the set of SPE-viable pricing rules is equal to the maximal set with 
   a suitable control strategy (Theorem \ref{positive}). 
On the equilibrium path, the control strategy does not place a market
order.

  These results show that 
   the market system without a controller cannot spontaneously prevent market manipulation, 
   unless the system uses 
    \textit{very restrictive} pricing rules; if we allow the use of any viable pricing rule, control 
    by a third party is necessary. 

   This result is a new finding on the viable pricing rules.
  Huberman and Stanzl (2004) show that the linear pricing rules are the key to viability.
  According to their result,  some linear pricing rules are sufficient to
prevent the market manipulation of my model 
  \textit{if} there are no momentum traders. 
  However, their result is not relevant when market prices show a trend. 
  The set of viable pricing rules in the environment of Huberman and Stanzl (2004)
  is the maximal set of my model.
  The main finding of this study is that the set remains viable in my environment if and only if the
   control is present.

  A related study, Ohashi (2018), analyzes the strategic trading model
between a single speculator and multiple dealers in the market with momentum
traders. He shows that rational dealers might use the Kyle-type
pricing rule to exploit momentum traders.

\section{The Model}\label{model}

  Assume that there is a single security and 
   an infinite trading period $n\in \{1,2,3\cdots\}=:\mathbb{N}$.
 There are three types of market participants in the market system: 
 speculators, momentum traders, and a controller.
 Speculators are risk neutral, and each of them lives for two periods: 
 speculator $n$ enters the market in period $n$,
 trades in periods $n$ and $n+1$, and exits the market before 
 period $n+2$. 
 Momentum traders are infinitely lived. 
 Their trading behavior is proportional to past price movements 
 (see Assumption \ref{A}).
 The controller is also infinitely lived. 
 This study investigates two cases: the absence and presence 
 of controls.

 The market system is described as a \textit{pricing rule}
  following Huberman and Stanzl (2004).
  At the beginning of each period $n$, 
 the market system bids a price quote, $\tilde {p}_n$.
   Each customer can observe the quote before submitting his or her order.
 For market order $q_n$, which is a quantity of the security,
the system determines the market price, $p_n$, as 
 \begin{equation}\label{im}
    p_n=\tilde{p}_n+P_n(q_n)
 \end{equation}
where $P_n(q_n)$ is a \textit{price-impact function}, 
which describes the  immediate price reaction to market order $q_n$.
 Each customer trades at $p_n$ in period $n$.
 If necessary, the market system clears the market by using its own inventory of the security. 
 After the trade, the system updates the quote as
 \begin{equation}\label{up}
  \tilde{p}_{n+1}=\tilde{p}_n+U_n(q_n)
 \end{equation}
and proceeds to the next period, where $U_n(q_n)$ is a 
\textit{price-update function} that captures 
  only trade's permanent price impact. 
Eqs.\,(\ref{im}) and (\ref{up}) yield
\begin{equation}\label{single}
p_{n+1}= p_{n}+U_n(q_n)-P_n(q_n)+P_{n+1}(q_{n+1})
\end{equation}
and
\[
p_n= p_0+\sum^{n-1}_{k=1}U_k(q_k)+P_n(q_n),
\] 
 where $p_0$ is the initial price in the market. 
I refer to $( U_n, P_n,)$ as a \textit{pricing rule} in period $n$.
 This model ignores the stochastic term on price formation, 
in which it differs from Huberman and Stanzl (2004).

 Let $x^i_n, i=1,2,$ denote speculator $n$'s market order
  in his or her $i$th period, and let $y_1=x^1_1$ and $y_n= x^2_{n-1}+x^1_n$ for each $n\geqq 2$.
 Let $\xi_n$ and $u_n$ denote the market orders of the momentum traders and the controller, respectively. 
 In each trading period, active speculators, momentum traders, 
 and the controller simultaneously place their market orders,
 that is, $q_n= y_n+\xi_n+u_n\in \mathbb{R}$.
  After the trade, $(p_n, q_n)$ is public, but not $(y_n, \xi_n, u_n)$.

  Here, I make the following three assumptions about the model.
 \begin{assumption}\label{A}  
  \item[1.]  $x^i_n\in \{-1,0,1\}$ and $x^1_n+x^2_n=0$ for each 
$n\in \mathbb{N}$. 
  \item[2.] $\xi_n=\beta(p_{n-1}-p_{n-2})$ with a constant $\beta\in \mathbb{R}_{+}$ if $n\geqq 2$; and \label{a2} 
  $\xi_1\equiv 0$. 
  \item[3.]  $U_n(q_n)=\lambda q_n$ and $P_n(q_n)=\mu q_n$, with 
 constants $\lambda, \mu \in \mathbb{R}_{++}$.
 \end{assumption}
  Assumption \ref{A}-1 is a normalization of speculators' trading. 
 Each speculator's order placement is bounded, and   
  each speculator exits the market with a zero position. 
  Assumption \ref{A}-2 characterizes the behavior of momentum traders, as in De Long et al. (1990).
  Assumption \ref{A}-3 implies that the pricing rules
 are \textit{linear} and  \textit{time independent}.
 Huberman and Stanzl (2004) provide the rationale for using 
 a linear price-update function.\footnote{Huberman and Stanzl (2004) show that time-independent price-update functions
 must be \textit{quasi-linear} for market viability, 
that is,  $U(q_n)=\lambda q_n+R(q_n)$ such
 that $\mathbf{E}(R(\tilde{q}_n)\mid \mathcal{G}_n)=0$, 
for an information set $\mathcal{G}_n$.
Because there are no random variables in my model, the quasi-linearity 
  is equivalent to linearity.} 
The linearity assumption on the price-impact functions is for simplification.

\subsection{Definitions of the games}
   The set of speculators' action spaces is
  \[ 
  X=\{~ (x^1, x^2)~|~ x^i\in\{-1,0,1\},~x^1+x^2=0\}. 
  \] 
  I describe $x_n\in X$ as speculator $n$'s action. 
 Speculator $n$'s payoff is $-p_nx^1_n-p_{n+1}x^2_n=(p_{n+1}-p_{n})x^1_n$.

Let $\mathcal{H}_n$ denote the set of all possible prices and 
quantities until period $n-1$, that is,
$\mathcal{H}_1=\{\emptyset\}$ and
 $\mathcal{H}_n=(\mathbb{R}_+\times \mathbb{R})^{n-1}$.
I refer to $h_n\in \mathcal{H}_n$ as the \textit{history} until period $n-1$. 
I define the \textit{strategy} of speculator $n$ as a mapping 
$s_n:\mathcal{H}_n \to X$. 
 This study focuses on \textit{pure strategies}.  
Let $S_n$ denote the set of all possible strategies of speculator $n$
and $s\in \prod_{n\in \mathbb{N}} S_n$ denote the strategy profile of speculators. 
Let $\pi_n(s)$ denote speculator $n$'s payoff.\footnote{By definition, $\pi_1(s)=\pi_n(s_1, s_{2})$ and 
$\pi_n(s)=\pi_n(s_{n-1},s_n, s_{n+1})$ for each $n\geqq 2$. } 
Hence, $(\mathbb{N}, (S_n)_{n\in \mathbb{N}}, (\pi_n)_{n\in \mathbb{N}})$ defines a game.  
$\Gamma(h_n)$ describes the \textit{subgame} beginning at history $h_n$
and $\pi_n(s \mid h_n)$ speculator $n$'s payoff in subgame $\Gamma(h_n)$.

\begin{definition} 
The strategy profile $s$ is an \textup{NE} 
 if $\pi_n(s)\geqq \pi_n(s'_n, s_{-n})$ holds for each $n\in \mathbb{N}$
 and $s'_{n}\in S_n$, where $s_{-n}\in \prod_{k\in \mathbb{N}\setminus\{n\}}S_k$.
 The strategy profile $s$ is an \textup{SPE} if 
 $\pi_n(s\mid h_n)\geqq \pi_n(s'_n, s_{-n} \mid h_n)$ holds 
for each $n\in \mathbb{N}$, $s'_n\in S_n$, and $h_n\in \mathcal{H}_n$. 
\end{definition}

  I refer to an outcome such that $x_n=(0,0)$ for each $n\in \mathbb{N}$
as \textit{no trade}.
  The market is \textit{NE-viable (SPE-viable)} 
if, under the associated pricing rule $(\lambda,\mu)$,
any NE (SPE) outcome leads to no trade.
 Then, $(\lambda,\mu)$ is said to be a \textit{viable pricing rule}.

 \section{The Results without Controls}
   This section aims to characterize the viable sets when 
   the controller is absent.

\subsection{The benchmark model}
Since the price-impact model of this study is based on Huberman and Stanzl (2004), I employ their model as a benchmark.
They analyze a single $N$-period-lived speculator model 
without momentum traders.
The set of the speculator's action space 
is $X=\bigcup_{N=1}^\infty X_N$, where 
  \[ 
  X_N=\left\{(x_1,\ldots,x_N)\in \mathbb{R}^N \mid  \sum_{n=1}^Nx_n=0 \right\}.
  \] 
The speculator's strategy is $s:\mathcal{H}_1\to X$. 
For each $s$, there is a unique $N$ such that 
the speculator's payoff is $\hat\pi_N(s):=-\sum_{n=1}^{N} p_nx_n$.
Let $\bm{0}_N$ denote the zero vector of $\mathbb{R}^N$ and 
let $Y=\bigcup_{N=1}^\infty\{\bm{0}_N\}$. 
The market is SPE-viable if and only if the optimal strategy $s$
is such that $s:\mathcal{H}_1\to Y$.

  \begin{proposition}\label{HS}
In the benchmark model, 
 the market is SPE-viable if and only if 
$(\lambda, \mu)$ satisfies $\lambda\leqq 2\mu$. 
 \end{proposition}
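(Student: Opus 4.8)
The plan is to collapse the $N$-period-lived speculator's optimization into a single sign-definite quadratic form in the order sequence, from which the threshold on $(\lambda,\mu)$ can be read off directly. Since the benchmark has neither momentum traders nor a controller, the aggregate order is the speculator's own order, $q_n=x_n$, and the closed-form price from the model specializes to $p_n=p_0+\lambda\sum_{k=1}^{n-1}x_k+\mu x_n$. I would introduce the cumulative position $X_n:=\sum_{k=1}^n x_k$, so that $X_0=0$ and, crucially, the liquidation constraint $\sum_{n=1}^N x_n=0$ built into $X_N$ forces $X_N=0$ as well.

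First I would substitute this price into the payoff $\hat\pi_N(s)=-\sum_{n=1}^N p_n x_n$. The constant $p_0$ multiplies $\sum_n x_n=0$ and drops out, leaving $\hat\pi_N=-\lambda\sum_{n=1}^N X_{n-1}x_n-\mu\sum_{n=1}^N x_n^2$. The one genuinely computational step is the summation-by-parts identity $\sum_{n=1}^N X_{n-1}x_n=-\tfrac12\sum_{n=1}^N x_n^2$, which I would obtain by telescoping: since $X_N=X_0=0$ we have $0=\tfrac12(X_N^2-X_0^2)=\tfrac12\sum_{n=1}^N(X_n^2-X_{n-1}^2)$, and expanding $X_n^2-X_{n-1}^2=2X_{n-1}x_n+x_n^2$ yields the claim. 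Combining the two gives the key representation
\[
\hat\pi_N(s)=\frac{\lambda-2\mu}{2}\sum_{n=1}^N x_n^2 .
\]

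With this formula the characterization is immediate. Because $\sum_n x_n^2>0$ for every nonzero admissible sequence, the round-trip payoff carries the sign of $\lambda-2\mu$ uniformly in $N$ and in the chosen trajectory. Hence no trade (payoff $0$) is optimal exactly when $\lambda-2\mu\le 0$: if $\lambda\le 2\mu$, every strategy yields a nonpositive payoff, so the no-trade strategy $s:\mathcal{H}_1\to Y$ is optimal and the market is SPE-viable (at $\lambda=2\mu$ all strategies break even, so no profitable manipulation exists and viability is preserved); if $\lambda>2\mu$, any nonzero $x\in X_N$ is strictly profitable, and since orders are real-valued the profit is in fact unbounded as the trade is scaled up, so no trade is not optimal and viability fails.

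The main obstacle is conceptual rather than computational: I must argue that the dynamic, subgame-perfect optimization of the $N$-period-lived speculator reduces to the static maximization of $\hat\pi_N$ over $X_N$. Because the speculator is the sole strategic agent and the price dynamics are deterministic, subgame perfection imposes nothing beyond global optimality of the order plan, so I would state explicitly that selecting an SPE strategy is equivalent to choosing $x\in X_N$ (and the horizon $N$) to maximize the payoff. Once this reduction is in place, the sign of the coefficient $(\lambda-2\mu)/2$ settles everything.
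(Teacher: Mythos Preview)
Your proposal is correct and follows essentially the same route as the paper: both reduce $\hat\pi_N$ to $\tfrac{\lambda-2\mu}{2}\sum_n x_n^2$ and read off the threshold. The only cosmetic difference is how the key identity $\sum_{n} X_{n-1}x_n=-\tfrac12\sum_n x_n^2$ is obtained: you telescope $X_n^2-X_{n-1}^2$, whereas the paper expands $(\sum_i x_i)^2=\sum_i x_i^2+2\sum_{i<j}x_ix_j=0$; since $X_N^2-X_0^2=(\sum_i x_i)^2$, these are the same computation in different dress.
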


 \begin{proof}
For each $s$, if $N=1$, $x_1=0$ and $\hat\pi_1(s)=0$; otherwise, 
\[
\begin{split}
\hat\pi_N(s)&=-\sum_{n=1}^N (p_0+\lambda\sum_{k=1}^{n-1}x_k+\mu x_n)x_n \\
 &=-p_0\sum_{n=1}^N x_n-\lambda\sum_{n=1}^N\left(\sum_{k=1}^{n-1}x_k \right)x_n-
\mu \sum_{n=1}^Nx_n^2\\
&=-\lambda\sum_{i<j}x_ix_j-\mu\sum_{n=1}^Nx^2_n
~~~~\left(
 \because \sum_{n=1}^Nx_n=0. \right)\\
&=(2\mu-\lambda)\left(-\frac{1}{2}\sum_{n=1}^N x_n^2\right).
~~~~\left(
 \because \sum_{i=1}^Nx_i\sum_{j=1}^Nx_j=\sum_{n=1}^Nx^2_n+
2\sum_{i<j}x_ix_j=0. \right)
\end{split}
\]
This equality implies that $2\mu-\lambda\geqq 0$ is necessary and sufficient 
to ensure SPE-viability. 
   \end{proof}

Proposition \ref{HS} implies that the permanent price impact should 
 be sufficiently small compared with the immediate price impact for 
 SPE-viability in the benchmark model.  
 Proposition \ref{HS} also defines the \textit{maximal set} of 
 viable pricing rules under  Assumption \ref{A}-3.

\begin{definition}
 The set of pricing rules $M$ is said to be the \textup{maximal set} if 
 \[
  M:=\{(\lambda, \mu)\in \mathbb{R}^2_{++} \mid \lambda \leqq 2\mu \}.
 \] 
\end{definition}

\subsection{The present model}

I return to the present model: two-period-lived speculators 
 and $\beta\geqq 0$.
I propose three results. 
The first is the following. 

 \begin{proposition} \label{p2}
  No trade NE exists if and only if 
  \[
 (\lambda,\mu)\in M_1(\beta)
 :=\{(\lambda,\mu)\in \mathbb{R}_{++} \mid  R\equiv \beta\mu^2-2\mu+\lambda\leqq 0\}.
\]
   \end{proposition}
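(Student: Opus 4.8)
The plan is to verify that the \textit{no-trade profile}---in which every speculator chooses $(0,0)$ after every history---is an NE precisely when $R\leqq 0$. I would first record the on-path dynamics. If all speculators play $(0,0)$, then $y_n=0$ for every $n$; by the recursion (\ref{single}) and induction $p_n=p_0$ for all $n$, and Assumption \ref{A}-2 then forces $\xi_n=0$ for all $n$. Hence along the no-trade path prices are flat, the momentum traders are dormant, and every speculator earns $0$.

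Next I would evaluate a single unilateral deviation by an arbitrary speculator $n$ to the action $(a,-a)$, $a\in\{-1,0,1\}$, with all other speculators held at $(0,0)$. Since speculator $n$'s payoff is $(p_{n+1}-p_n)x^1_n$, only periods $n$ and $n+1$ matter. In period $n$ one has $q_n=x^2_{n-1}+x^1_n+\xi_n=a$, because the predecessor's second-period order and the current momentum term both vanish on the path; thus $p_n=p_0+\mu a$. The decisive step is the feedback into the momentum traders: the realized gain $p_n-p_{n-1}=\mu a$ enters Assumption \ref{A}-2 to produce $\xi_{n+1}=\beta\mu a$. Then $q_{n+1}=x^2_n+x^1_{n+1}+\xi_{n+1}=-a+\beta\mu a$, and since only $q_n=a$ is nonzero among the first $n$ orders, $p_{n+1}=p_0+\lambda a+\mu(-a+\beta\mu a)=p_0+(\lambda-\mu+\beta\mu^2)a$. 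Subtracting gives $p_{n+1}-p_n=(\lambda-2\mu+\beta\mu^2)a=Ra$, so the deviation payoff equals $(p_{n+1}-p_n)x^1_n=Ra^2$.

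From here the equivalence is immediate. Because $a^2\in\{0,1\}$, the best available deviation yields $\max\{0,R\}$, so the no-trade profile tolerates no profitable deviation if and only if $R\leqq 0$: taking $a=\pm1$ supplies necessity, while $R\leqq 0$ supplies sufficiency. As the computation is uniform in $n$ (for $n=1$ one invokes $\xi_1\equiv 0$ directly), this is exactly the condition $(\lambda,\mu)\in M_1(\beta)$.

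The step I expect to be the main obstacle is the bookkeeping around the feedback channel $\xi_{n+1}=\beta\mu a$ together with the claim that only periods $n$ and $n+1$ enter $\pi_n$. The deviation in fact perturbs $\xi_{n+2},\xi_{n+3},\dots$ and hence every subsequent price, yet none of these appears in speculator $n$'s payoff, so they may be discarded; one must also keep in mind that the action-space constraint forces $x^2_n=-a$, so the single scalar $a$ parameterizes the entire deviation.
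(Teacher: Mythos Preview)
Your proof is correct and follows essentially the same approach as the paper: fix the no-trade profile in which every speculator plays $(0,0)$ at every history, compute a unilateral deviation to $(a,-a)$, and read off the payoff $Ra^2$. The paper argues only for speculator~1 and leaves the symmetry in $a$ and the sufficiency direction implicit, whereas you carry out the computation for a generic $n$ and make both implications explicit; the underlying idea is identical.
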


  \begin{proof}
    Suppose that the NE strategy profile $s$ 
     realizes $x_n=(0,0)$ for each $n\in \mathbb{N}$.
It is sufficient to show that speculator 1 has no incentive to deviate 
 from $s$.
Suppose that $x_1=(1,-1)$.
Then, $p_1=p_0+\mu$ and $p_2=p_0+\lambda +\mu(\beta\mu-1)$.
The payoff is $p_2-p_1=\beta\mu^2-2\mu+\lambda$. 
I obtain the same payoff when $x_1=(-1,1)$.
Hence, NE-viability requires $\beta\mu^2-2\mu+\lambda\leqq 0$. 
  \end{proof}
  Condition $R\leqq 0$ reduces to $\lambda\leqq 2\mu$
when $\beta=0$, which implies that 
 $M_1(\beta) \subset M$ for each $\beta\geqq 0$ and $M_1(0)=M$. 
However, Proposition \ref{p2} does not ensure the uniqueness 
of the equilibrium outcomes. Hence, the market is NE-viable \textit{only if}
 $(\lambda,\mu)\in M_1(\beta)$.

 Next, I provide a condition for which we can ignore some undesirable NEs.   
  Let $s'$ denote the strategy profile, 
    which maps $x_n=(1,-1)$ for each $n\geqq  1$. 
   Then, I have $y_1=1$ and $y_n=0$ for each $n\geqq 2$. 
    From Assumptions \ref{A}-2 and \ref{A}-3, the sequence  $(y_n)^\infty_{n=1}$ generates 
    a sequence $(q_n)^{\infty}_{n=1}$ such that $q_1=y_1=1$ and 
  \[ 
     \begin{split}
    q_n =\xi_n &= \beta(p_{n-1}-p_{n-2})\\
               &= \beta((\lambda-\mu)q_{n-2}+\mu q_{n-1}) 
     \end{split}          
  \] 
      for each $n\geqq 2$.
   Then, the payoff of speculator $n$ is 
  \[ 
    \begin{split}
    p_{n+1}-p_n&=\mu q_{n+1}+ (\lambda-\mu)q_n\\
               &=\frac{q_{n+2}}{\beta}. \label{piinf}
    \end{split}
   \] 
 If  $D\equiv \beta\mu^2-4\mu+4\lambda\geqq 0$, then
 $q_n>0$ for each $n\geqq 1$ (see the Appendix).
  In this case, the market price will grow monotonically.
  Therefore, each speculator $n$ obtains a positive profit margin. 
  This observation implies that strategy profile $s'$ is an NE.
  Hence, $D<0$ is \textit{necessary} for NE-viability. 
  Let $M_2(\beta)$ denote the set of pricing rules such that
 \[
    M_2(\beta)
 :=\{(\lambda,\mu)\in \mathbb{R}_{++} \mid  D\equiv \beta\mu^2-4\mu+4\lambda< 0\}.
 \]
\begin{proposition}\label{D<0}
The market is NE-viable only if 
   \[
 (\lambda,\mu)\in M_1(\beta) \cap M_2(\beta).
\]
\end{proposition}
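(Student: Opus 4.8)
The plan is to prove the two necessary conditions separately, since the statement is the intersection $M_1(\beta)\cap M_2(\beta)$ and each factor comes from a different equilibrium consideration. For the factor $M_1(\beta)$ I would simply invoke Proposition \ref{p2}: if the market is NE-viable, then every NE outcome must be no trade, so the no-trade profile must itself be sustainable as an NE; by Proposition \ref{p2} this is possible only when $R\leqq 0$, i.e. $(\lambda,\mu)\in M_1(\beta)$. For the factor $M_2(\beta)$ I would argue by contraposition: assuming $D\geqq 0$, I exhibit the profile $s'$ (every speculator plays $(1,-1)$ at every history) as an NE that produces trade, contradicting NE-viability and thereby forcing $D<0$. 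Intersecting the two conclusions gives the claim.

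The substance is the verification that $s'$ is an NE when $D\geqq 0$. Fixing every speculator $k\neq n$ at $s'$ and letting speculator $n$ deviate to $x^1_n=a\in\{-1,0,1\}$, I would compute the only quantity entering speculator $n$'s payoff, namely $p_{n+1}-p_n=(\lambda-\mu)q_n+\mu q_{n+1}$, using Assumptions \ref{A}-2 and \ref{A}-3. Because the momentum term $\xi_n$ and the order $q_{n-1}$ are pinned down by the history before period $n$, the deviation only perturbs $q_n,q_{n+1}$ through $a$, and the computation is linear, giving $p_{n+1}-p_n=Ra+C$ for a history-determined constant $C$. Hence speculator $n$'s payoff $(p_{n+1}-p_n)\,x^1_n$ equals $Ra^2+Ca$, whose values at $a\in\{1,0,-1\}$ are $R+C$, $0$, and $R-C$. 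The on-path value $R+C$ must coincide with the equilibrium payoff $q_{n+2}/\beta$ already recorded in the text, so $C=q_{n+2}/\beta-R$.

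The decisive step is a sign check that leans on the $M_1(\beta)$ conclusion obtained first. If $D\geqq 0$, the recurrence generated by $s'$ has $q_n>0$ for all $n$, so $R+C=q_{n+2}/\beta>0$, which beats no trade; and since $R\leqq 0$ has already been established, $C=q_{n+2}/\beta-R>0$, which gives $R+C>R-C$ and rules out the reverse trade $(-1,1)$. Thus $(1,-1)$ is the unique best response for every $n$, so $s'$ is an NE with trade. The main obstacle is exactly the positivity claim $q_n>0$ under $D\geqq 0$ (deferred to the Appendix): this requires analyzing the characteristic roots of $t^2-\beta\mu t-\beta(\lambda-\mu)=0$, whose discriminant is $\beta D$, and checking that the initial data $q_1=1$ keep the solution positive. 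That positivity is what underwrites both the monotone price growth and the positive per-period profit on which the whole equilibrium argument rests, and it is the one place where real work beyond routine algebra is needed.
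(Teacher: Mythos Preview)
Your proof is correct and follows the paper's approach: invoke Proposition~\ref{p2} for the $M_1(\beta)$ part, and for $M_2(\beta)$ argue by contraposition that when $D\geqq 0$ the all-$(1,-1)$ profile $s'$ is a Nash equilibrium with trade, the key technical input being the positivity $q_n>0$ from Lemma~\ref{+} in the Appendix. You are in fact more thorough than the paper, which simply notes that each speculator's on-path payoff $q_{n+2}/\beta$ is positive and declares $s'$ an NE; your explicit computation of the deviation payoffs $Ra^2+Ca$ and use of the already-established inequality $R\leqq 0$ to rule out the reverse trade $(-1,1)$ closes a small gap the paper leaves open.
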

\begin{proof}
 See the Appendix.
\end{proof}

The converse \textit{might} not be true.
The Appendix shows that, if $D<0$,  there exist infinitely many $n$ 
such that $\xi_n<0$ on the path of $s'$. 
    In this case, there exists $n^*$ such that speculator $n^*$ is 
    \textit{better off} choosing $(0,0)$, while 
   speculators 1 to $n^*-1$ choose $(1,-1)$.
 However, speculators 1 to $n^*-1$ could gain a positive profit
even if speculators $n\geqq n^*$ choose $(0,0)$.  
Moreover, there may be other complicated strategies that bring
a positive profit for speculators. 
 
 Finally, I provide a sufficient condition that ensures 
 SPE-viability.

  \begin{proposition}\label{s}
 The market is SPE-viable if  
  \[
   (\lambda,\mu)\in  M_3(\beta)
 :=\{(\lambda,\mu)\in \mathbb{R}_{++} \mid  L\equiv \beta\mu^2-2\mu+2\lambda< 0\}.
 \]
  \end{proposition}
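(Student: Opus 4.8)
The plan is to prove the statement directly, by assuming $(\lambda,\mu)\in M_3(\beta)$ and showing that no subgame perfect equilibrium can carry any trade on the equilibrium path. Suppose, toward a contradiction, that some SPE has trade, and let $n_0:=\min\{n\mid x^1_n\neq 0\}$ be the first period in which a speculator submits a nonzero order on the path; write $a_n:=x^1_n$, so that $x^2_n=-a_n$ by Assumption \ref{A}-1. Since no order is placed before $n_0$, the pre-history is trivial: $q_k=0$ and $p_k=p_0$ for $k<n_0$, whence $\xi_{n_0}=\beta(p_{n_0-1}-p_{n_0-2})=0$ and $q_{n_0}=a_{n_0}$. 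The key simplifying device is that every speculator's payoff factors as $\pi_n=x^1_n\,(p_{n+1}-p_n)$, so the inactive action $(0,0)$ yields exactly $0$ no matter how prices subsequently evolve. Hence at each on-path node the equilibrium continuation payoff of the active speculator must be nonnegative.

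I would then compute $\pi_{n_0}$ and $\pi_{n_0+1}$ from the recursion $p_{n+1}-p_n=(\lambda-\mu)q_n+\mu q_{n+1}$, together with $q_{n+1}=-a_n+a_{n+1}+\xi_{n+1}$ and $\xi_{n+1}=\beta(p_n-p_{n-1})$. A short calculation gives $\pi_{n_0}=R+\mu\,a_{n_0}a_{n_0+1}$. Because $L<0$ forces $R=L-\lambda<0$, the constraint $\pi_{n_0}\geq 0$ can hold only if $a_{n_0}a_{n_0+1}=1$ and $R+\mu\geq 0$; that is, the next speculator must trade in the same direction. Using the sign-flip symmetry $x\mapsto-x$ I may take $a_{n_0}=a_{n_0+1}=1$.

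The second step closes the argument. Propagating the initial conditions through two periods yields $q_{n_0}=1$, $\xi_{n_0+1}=\beta\mu$, and $q_{n_0+1}=\beta\mu$, and after simplification
\[
\pi_{n_0+1}=\beta\mu\,L+\mu\,(a_{n_0+2}-1)\leq \beta\mu\,L .
\]
If $\beta>0$ this is strictly negative, so speculator $n_0+1$ would strictly prefer the inactive action, contradicting equilibrium. If $\beta=0$, then $L<0$ already gives $\lambda<\mu$, whence $R+\mu=\lambda-\mu<0$ contradicts the requirement $R+\mu\geq 0$ obtained above (equivalently, the first speculator cannot profit to begin with). In either case we obtain a contradiction, so no equilibrium trades and the market is SPE-viable. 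I note that only on-path one-shot deviations to $(0,0)$ are used, so the same reasoning applies verbatim to any NE.

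I expect the main obstacle to be bookkeeping of the momentum linkage rather than any conceptual difficulty. Speculator $n_0$'s willingness to trade depends on the following speculator's order $a_{n_0+1}$ only through the induced momentum term $\xi_{n_0+1}$, and the entire result hinges on the algebraic identity that the best attainable payoff of the \emph{second} trader is precisely $\beta\mu\,L$. Establishing that identity cleanly requires carefully carrying the initial conditions at $n_0$ through the coupled recursions for $q$, $\xi$, and $p$; once one recognizes that the inactive deviation provides a continuation-independent benchmark payoff of $0$, the single inequality $L<0$ does all of the remaining work.
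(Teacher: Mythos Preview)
Your argument is correct and follows the same two-step logic as the paper's proof: compute the first active speculator's payoff to see that it can be nonnegative only if the next speculator ``support-buys'' in the same direction, then compute that second speculator's payoff and obtain $\beta\mu L+\mu(a_{n_0+2}-1)\leq\beta\mu L<0$, forcing a profitable deviation to $(0,0)$. Your contradiction framing is somewhat cleaner than the paper's exhaustive enumeration over $(x_2,x_3)$, your separate treatment of $\beta=0$ closes a borderline case the paper glosses over, and your remark that only on-path deviations to $(0,0)$ are used---so the conclusion in fact holds for every NE, not just every SPE---is a small genuine strengthening.
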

   \begin{proof}
 See the Appendix.
 \end{proof} 

 Because SPE-viability implies NE-viability, Proposition \ref{s}
provides a sufficient condition for NE-viability. 
In the case of no momentum traders, $\beta=0$, we have
 \[
 M_2(0)=M_3(0)\subsetneq M_1(0)=M,
 \]
 which gives the following result. 
\begin{corollary}\label{cor1} 
 We assume that $\beta=0$.
 The market 
 is NE-viable if and only if $\lambda<\mu$. 
\end{corollary}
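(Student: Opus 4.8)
The plan is to derive the corollary as a direct specialization of the two bracketing results already established, Propositions \ref{D<0} and \ref{s}, to the case $\beta=0$, and to observe that the necessary condition and the sufficient condition coincide there. The whole argument reduces to evaluating the three defining quadratics $R$, $D$, and $L$ at $\beta=0$ and comparing the resulting half-planes; no new equilibrium analysis is required.

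For the \emph{only if} direction I would assume NE-viability and invoke Proposition \ref{D<0}, which yields $(\lambda,\mu)\in M_1(0)\cap M_2(0)$. It then remains to compute these sets at $\beta=0$. Setting $\beta=0$ in $R\equiv\beta\mu^2-2\mu+\lambda$ gives $M_1(0)=\{\lambda\leqq 2\mu\}=M$, while setting $\beta=0$ in $D\equiv\beta\mu^2-4\mu+4\lambda$ gives $M_2(0)=\{\lambda<\mu\}$. Because $\mu>0$ forces $\lambda<\mu\Rightarrow\lambda<2\mu$, we have $M_2(0)\subsetneq M_1(0)$, so the intersection collapses to $M_2(0)$ and NE-viability forces $\lambda<\mu$.

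For the \emph{if} direction I would start from $\lambda<\mu$. Setting $\beta=0$ in $L\equiv\beta\mu^2-2\mu+2\lambda$ gives $M_3(0)=\{\lambda<\mu\}$, so $(\lambda,\mu)\in M_3(0)$. Proposition \ref{s} then delivers SPE-viability, and since SPE-viability implies NE-viability (as noted immediately after that proposition), the market is NE-viable. Combining the two directions gives the stated equivalence.

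There is no genuine obstacle here, only one point worth flagging: the reason the statement is an equivalence rather than the two-sided bracket of Propositions \ref{D<0} and \ref{s} is the coincidence $M_2(0)=M_3(0)=\{\lambda<\mu\}$ together with $M_2(0)\subseteq M_1(0)$. For general $\beta$ the sufficient set $M_3(\beta)$ sits strictly inside the necessary set $M_1(\beta)\cap M_2(\beta)$, leaving a gap in which NE-viability is undetermined; at $\beta=0$ that gap closes precisely because $D=4(\lambda-\mu)$ and $L=2(\lambda-\mu)$ differ only by a positive scalar, while $R=\lambda-2\mu$ is dominated by $\lambda-\mu$ and therefore imposes no additional constraint once $\lambda<\mu$ holds.
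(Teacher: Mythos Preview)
Your proposal is correct and follows essentially the same approach as the paper: the paper derives the corollary in one line by observing that $M_2(0)=M_3(0)\subsetneq M_1(0)=M$ and then applying Propositions~\ref{D<0} and~\ref{s}, which is exactly the bracketing you spell out. Your write-up simply unpacks that line, including the explicit evaluations $D=4(\lambda-\mu)$, $L=2(\lambda-\mu)$, $R=\lambda-2\mu$, and the implication from SPE-viability to NE-viability that the paper asserts right after Proposition~\ref{s}.
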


Kyle's (1985) equilibrium pricing rule, which is an $N$-period 
trading model under asymmetric information, takes the form 
$p_n=p_{n-1}+\lambda_n q_n$. In my model, this pricing rule 
maps each price to the area $\lambda=\mu$. 
Hence, the Kyle's pricing rule does not preclude the manipulation strategy of this study.

Figure \ref{fig:3} illustrates the characterization results.  
It shows that $M_3(\beta)\subset M_1(\beta)$;
NE-viability implies $R\leqq 0$ (Proposition \ref{p2})
and this is implied by $L<0$ (Proposition \ref{s}). 
It also shows that neither $M_1(\beta)\subset M_2(\beta)$ 
nor $M_2(\beta)\subset M_1(\beta)$;
Conditions $R\leqq 0$ and $D<0$ are both necessary for NE-viability
 (Proposition \ref{D<0}), but there is no inclusive relationship.

The viable pricing rule of the present model
 is more restrictive than that of the benchmark model,
because the present model allows 
speculators to do ``support buying'' (i.e., 
 the next speculator's purchase cancels out 
   a negative impact on price by the current speculator's sale).
The chain of such behavior is profitable to speculators.

      \begin{figure}[htbp]
    \centering
   \input{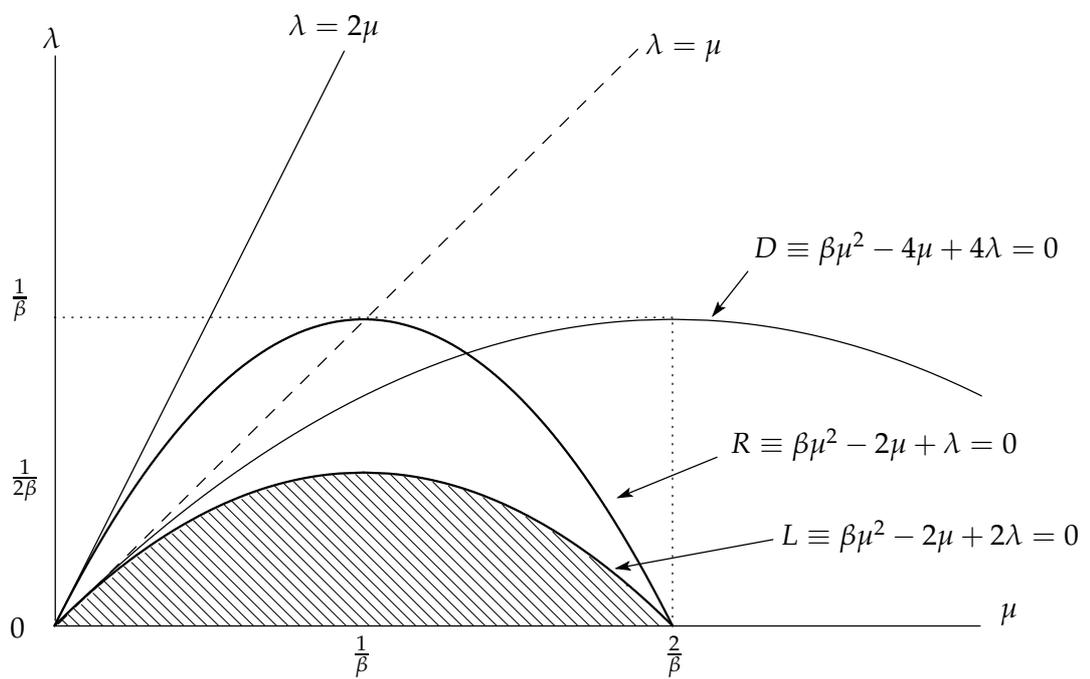}
   \vspace*{10pt}\\
   \caption{
The shaded area exhibits $M_3(\beta)$.
This figure shows that  $M_3(\beta)\subset M_1(\beta)$ but
there is no inclusive relationship between $M_1(\beta)$ and $M_2(\beta)$.}
  \label{fig:3}
   \end{figure}%

\section{The Results with Controls}
   This section aims to characterize the sets of viable pricing rules
    when  the controller is present.

Let $u_n$ denote the controller's market order placed in period $n$. 
 I refer to the sequence $u=(u_n)^\infty_{n=1}$  as  a \textit{control}. 
    Let $u^{k}$ denote a control such that $u_n=0$ for each $n<k$.

 Suppose that each speculator $n$ chooses $x_n=(1,-1)$.
Eq.\,(\ref{single}) with Assumption \ref{A} yields
\begin{equation}\label{price}
   p_{n+1} =p_n+(\lambda-\mu)q_n+\mu q_{n+1} 
\end{equation}
for each $n\geqq 1$.
Because $q_{n+1}=\xi_{n+1}+u_{n+1}$, Assumption \ref{A}
implies
\begin{equation}\label{quantity}
    q_{n+1} =\beta\mu q_n+\beta(\lambda-\mu) q_{n-1}+ u_{n+1}
    \end{equation} 
for each $n\geqq 1$. 
 I describe Eqs.\,(\ref{price}) and (\ref{quantity})
 as a difference equation system:
  \begin{equation}
\bm{z}_{n+1}    =
    A \bm{z}_n
      + 
B
     \bm{u}_n,              \label{sys}
  \end{equation}
where
\[ 
 A=
   \begin{pmatrix}
       1 & \lambda-\mu+\beta\mu^2 & \beta(\lambda-\mu)\mu \\
       0 & \beta\mu & \beta(\lambda-\mu) \\
       0 & 1    & 0    \\     
   \end{pmatrix}, 
   \hspace*{20pt}
  B= 
   \begin{pmatrix}
      \mu \\
      1   \\
      0
   \end{pmatrix} \label{AB}
 \] 
 and
 \begin{equation}\label{u}
 \bm{z}_n=
  \begin{pmatrix}
       p_{n} \\
       q_{n} \\
       q_{n-1}
     \end{pmatrix},
 \hspace*{20pt}
 \bm{u}_n=u_{n+1}.
      \end{equation}
 I refer to $\bm{z}_n$ as a \textit{state} in period $n$.

Here, the definition of controllability is presented. 
\begin{definition} \textup{(Elaydi (2005), p.~432.)}
 System (\ref{sys}) is said to be \textup{controllable} if 
 for each $k\in \mathbb{N}$, for each initial state $\bm{z}_{k-1}$, and for each
  final state  $\bm{z}^*$, there exists a finite number $N>k-1$ and 
  a control $u^k$, $k\leqq N$, such that $\bm{z}_N=\bm{z}^*$.
\end{definition} 
  If System (\ref{sys}) is controllable, the controller can set a 
 market condition to a final state within finite periods only by 
 placing market orders.
  The \textit{market control succeeds} if System (\ref{sys}) 
 is  controllable. 
  Based on discrete control theory,
   System (\ref{sys}) is controllable if and only if  
    the matrix $W= [B,~ AB,~A^2B] $  has a full row rank.\footnote{
See Theorem 10.4 of Elaydi (2005, p.~433),
which is presented in the Appendix for convenience. 
} 
  In the present model, 
    \[ 
       W=
      \begin{pmatrix}
        \mu & \mu(\beta\mu)+\lambda & \mu(\beta\mu)^2-(\mu-2\lambda)\beta\mu+\lambda \\
        1   & \beta\mu            & (\beta\mu)^2-\beta\mu+\beta\lambda \\
        0   & 1               & \beta\mu
      \end{pmatrix}
   \] 
and $\det(W)=\lambda$. 
 I summarize the above as a proposition. 
 \begin{proposition}\label{con}
   Market control succeeds if and only if $\lambda>0$. 
 \end{proposition}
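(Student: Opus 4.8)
The plan is to invoke the controllability criterion already cited in the text (Theorem 10.4 of Elaydi) and reduce the entire statement to a single determinant computation. Since the state $\bm{z}_n$ lives in $\mathbb{R}^3$ and $B$ is a column vector, the controllability matrix $W=[B,\,AB,\,A^2B]$ is $3\times 3$. By the cited theorem, System~(\ref{sys}) is controllable if and only if $W$ has full row rank, i.e.\ $\mathrm{rank}(W)=3$. For a square matrix this is equivalent to $\det(W)\neq 0$, so everything hinges on evaluating $\det(W)$.

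First I would confirm the displayed entries of $W$ by computing $AB$ and $A^2B$ directly from the matrices $A$ and $B$ defined above; this reproduces the stated $W$. I would then evaluate the determinant by cofactor expansion along the first column, which is convenient because the $(3,1)$ entry is zero, leaving only two $2\times 2$ minors weighted by $\mu$ and $1$. Writing $a=\beta\mu$ to compress the algebra, the minor attached to $\mu$ equals $a-\beta\lambda$, while the minor attached to the $(2,1)$ entry equals $(\mu-\lambda)a-\lambda$. Substituting $a=\beta\mu$ back in, the terms $\mu\beta\lambda$ cancel and the expansion collapses to $\det(W)=\lambda$, confirming the value asserted just before the proposition.

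Finally I would conclude: $\det(W)=\lambda$, so $W$ has full row rank if and only if $\lambda\neq 0$. Under Assumption~\ref{A}-3 we have $\lambda\in\mathbb{R}_{++}$, hence $\lambda\neq 0$ is equivalent to $\lambda>0$. Therefore System~(\ref{sys}) is controllable, i.e.\ market control succeeds, exactly when $\lambda>0$.

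The main difficulty is not conceptual but one of bookkeeping: the entries of $W$ contain several terms quadratic in $\beta\mu$, and the clean outcome $\det(W)=\lambda$ relies on the pairwise cancellation of every term carrying $\beta$ or $\mu$. Expanding along the first column is the step that keeps this tractable; expanding along a fuller row or column would generate many more cross terms before cancellation. Beyond that, I would flag the mild interpretive subtlety that $\lambda>0$ is \emph{maintained} throughout by Assumption~\ref{A}-3, so the ``if and only if'' genuinely characterizes controllability in the parameter $\lambda$ but is never restrictive within the admissible parameter domain of the model.
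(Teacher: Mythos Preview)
Your proposal is correct and follows exactly the paper's approach: invoke the rank criterion from Theorem~10.4 of Elaydi, compute the controllability matrix $W=[B,\,AB,\,A^2B]$, evaluate $\det(W)=\lambda$, and conclude. The only difference is that you spell out the cofactor expansion along the first column, whereas the paper simply asserts $\det(W)=\lambda$ without showing the cancellation; your added detail is accurate and the two arguments are otherwise identical.
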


   Next, I require that any control \textit{stabilizes} the market in 
the sense that,   for each $k$, $u^k$ must asymptotically satisfy
 $(p_n,q_n)\to (p_{k-1},q_{k-1})$.
 Consider System (\ref{sys}) with a linear feedback 
 $\bm{u}_n=-S\bm{z}_n$,
 where $S=(\sigma_1,\sigma_2,\sigma_3)$ is a real $(1\times 3)$
 matrix.
 Then,
 \[ 
   \begin{split}
   \bm{z}_{n+1}&=A\bm{z}_n+B\bm{u}_n \\
               &=(A-BS)\bm{z}_n.  \label{A-BK}
   \end{split}             
 \] 
  System (\ref{sys}) is \textit{stabilizable} if, for each $k$, 
there exists matrix $S$ such that control $\bm{u}^k=-S\bm{z}_k$
 achieves $\displaystyle\lim_{k\to\infty}\bm{z}_k=\bm{z}_{k-1}$.
 \textit{Market stabilization succeeds}  
  if System (\ref{sys}) is stabilizable. 
 I define $\bm{z}^*=\bm{z}_{k-1}$ and
 $\bm{y}_k=\bm{z}_k-\bm{z}^*$ to describe  
 System (\ref{sys}) as follows:  
 \[
 \bm{y}_{k+1}=A(\bm{y}_k+\bm{z}^*)+B\bm{u}_k-\bm{z}^*
 =\bm{z}_{k+1}-\bm{z}^*.\]
Stabilization to an initial state $\bm{z}^*$
is equivalent to stabilization to $\bm{0}$. 
Hence, the goal of stabilization is 
 $\bm{z}_k \to\bm{0}$, without loss of generality.
 The next proposition is useful for stabilization.

\begin{proposition}\label{E}\textup{(Elaydi (2005), Theorem 10.19.)}
  Let $\Phi  =\{\varphi _1,\varphi _2,\varphi _3\}$ be an arbitrary 
 set of  3 complex  numbers such that $\overline{\Phi }=
 \{\bar{\varphi }_1,\bar{\varphi }_2,\bar{\varphi }_3\}=\Phi $.
  Then, System (\ref{sys}) is controllable if and only if there exists
  a matrix $S$ such that the eigenvalues of $A-BS$ are the set $\Phi $. 
 \end{proposition}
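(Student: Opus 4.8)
The plan is to prove this as the single-input pole-assignment theorem, reading the statement as the equivalence between controllability of System~(\ref{sys}) and the property that for \emph{every} self-conjugate set $\Phi$ there is a real feedback $S$ with $\mathrm{spec}(A-BS)=\Phi$. The hypothesis $\overline{\Phi}=\Phi$ is exactly what guarantees that the target polynomial $d(t):=\prod_{i=1}^3(t-\varphi_i)=t^3+d_2t^2+d_1t+d_0$ has real coefficients, so that the $S$ I construct is real. Throughout, write $\chi_A(t)=t^3+a_2t^2+a_1t+a_0$ for the characteristic polynomial of $A$; since $B$ is a column and $S$ a row, $A-BS$ is a rank-one update of $A$.

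For the forward direction, suppose System~(\ref{sys}) is controllable, so that $W=[B,\,AB,\,A^2B]$ is invertible (cf.\ Proposition~\ref{con}). The first step is to pass to controllable canonical form: there is an invertible $T$ with $\hat A:=T^{-1}AT$ equal to the companion matrix of $\chi_A$ and $\hat B:=T^{-1}B=(0,0,1)^\top$; the rows of $T^{-1}$ may be taken to be $w^\top,\,w^\top A,\,w^\top A^2$, where $w^\top$ is the last row of $W^{-1}$, and invertibility of $W$ is precisely what legitimizes this change of basis. The second step is the explicit computation in companion coordinates: for $\hat S=(s_1,s_2,s_3)$ the matrix $\hat A-\hat B\hat S$ is again a companion matrix with characteristic polynomial $t^3+(a_2+s_3)t^2+(a_1+s_2)t+(a_0+s_1)$, so the choice $\hat S=(d_0-a_0,\,d_1-a_1,\,d_2-a_2)$ forces this polynomial to equal $d(t)$. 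The third step transports back: setting $S:=\hat S\,T^{-1}$ gives $T^{-1}(A-BS)T=\hat A-\hat B\hat S$, so $A-BS$ and $\hat A-\hat B\hat S$ are similar and share the spectrum $\{\varphi_1,\varphi_2,\varphi_3\}=\Phi$. Because $\overline{\Phi}=\Phi$, the $d_i$ and hence $S$ are real.

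For the converse I argue by contraposition: if System~(\ref{sys}) is not controllable then some self-conjugate $\Phi$ is unattainable. Here $\mathcal{C}:=\mathrm{im}(W)=\mathrm{span}\{B,AB,A^2B\}$ has dimension $r<3$; it contains $B$ and is $A$-invariant (Cayley--Hamilton returns $A^3B$ to $\mathcal{C}$). Choosing a basis whose first $r$ vectors span $\mathcal{C}$ brings the pair to block-triangular form $\tilde A=\bigl(\begin{smallmatrix}A_{11}&A_{12}\\0&A_{22}\end{smallmatrix}\bigr)$, $\tilde B=\bigl(\begin{smallmatrix}B_1\\0\end{smallmatrix}\bigr)$, and then for every feedback $\tilde S=(S_1,S_2)$ the matrix $\tilde A-\tilde B\tilde S$ leaves the lower-right block $A_{22}$ untouched. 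Hence $\mathrm{spec}(A_{22})\subseteq\mathrm{spec}(A-BS)$ for all $S$, so any $\Phi$ disjoint from the fixed modes $\mathrm{spec}(A_{22})$ cannot be realized, giving the required contradiction.

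The main obstacle is the forward direction's first step: producing the similarity $T$ realizing controllable canonical form and verifying that $\hat A,\hat B$ have the claimed shape. Once the pair is in companion form the feedback is read off by matching coefficients, and the similarity identity $T^{-1}(A-BS)T=\hat A-\hat B\hat S$ is routine; the conjugation hypothesis enters only to keep $S$ real. The converse is comparatively soft, resting on the $A$-invariance of $\mathcal{C}$ and the block-triangular structure it induces.
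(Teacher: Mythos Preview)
The paper does not prove this proposition at all: it is quoted verbatim as Theorem~10.19 of Elaydi (2005) and used as a black box in the proof of Proposition~\ref{stab}. So there is no ``paper's own proof'' to compare against; your proposal supplies what the paper outsources.

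Your argument is the standard single-input pole-placement proof and is correct. Two remarks. First, your reading of the quantifier is the right one: as literally written, the proposition fixes a single $\Phi$ and asserts an equivalence, which would make the ``if'' direction false (take $\Phi=\mathrm{spec}(A)$ and $S=0$); the intended content, and what Elaydi's Theorem~10.19 actually states, is the equivalence with placement for \emph{every} self-conjugate $\Phi$, exactly as you interpret it. Second, the one step you flag as the main obstacle---invertibility of the matrix with rows $w^\top,\,w^\top A,\,w^\top A^2$ and the verification that $T^{-1}AT$ is the companion of $\chi_A$---is indeed the only nontrivial point; it follows from controllability of $(A,B)$ via the identity $w^\top A^{i}\cdot A^{j}B=\delta_{i+j,2}$ for $i+j\le 2$ together with Cayley--Hamilton, but you should spell this out rather than assert it. The converse via the Kalman decomposition and the invariance of $\mathrm{spec}(A_{22})$ under feedback is fine.
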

Thus, System (\ref{sys}) is \textit{stabilizable} if the eigenvalues 
of $A-BS$ lie inside the unit disk. 
 Applying Proposition \ref{E} to System (\ref{sys}), I obtain the 
following result.
 \begin{proposition}\label{stab}
 Market stabilization succeeds if $\lambda>0$. 
 \end{proposition}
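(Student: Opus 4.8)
The plan is to combine the controllability criterion of Proposition~\ref{con} with the pole-placement theorem of Proposition~\ref{E}. The logical chain is: $\lambda>0$ gives controllability; controllability allows the eigenvalues of $A-BS$ to be assigned arbitrarily (subject to conjugation-closure); choosing them inside the open unit disk makes the closed-loop system $\bm{z}_{n+1}=(A-BS)\bm{z}_n$ asymptotically stable; and the text has already identified this property with stabilizability. Thus the whole argument is essentially an application of results I am entitled to assume.

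First I would observe that $\lambda>0$ together with Proposition~\ref{con} makes System~(\ref{sys}) controllable, so by Proposition~\ref{E} there is a real $1\times 3$ matrix $S$ for which the eigenvalues of $A-BS$ equal any prescribed self-conjugate triple $\Phi$. The cleanest choice is $\Phi=\{0,0,0\}$: it is trivially self-conjugate and lies inside the unit disk. Fixing the associated $S$, the matrix $A-BS$ is nilpotent, so by the Cayley--Hamilton theorem $(A-BS)^3=O$, whence $\bm{z}_n=\bm{0}$ for all sufficiently large $n$ (``deadbeat'' control). Recalling the shift $\bm{y}_k=\bm{z}_k-\bm{z}^*$ already performed in the text --- under which stabilization to $\bm{z}^*$ is equivalent to stabilization to $\bm{0}$ --- this finite-time extinction certainly yields $\bm{z}_k\to\bm{0}$, which is exactly stabilizability, so market stabilization succeeds.

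I could equally take any self-conjugate $\Phi$ with each $|\varphi_i|<1$; then $A-BS$ has spectral radius below $1$ and the standard theory of linear difference equations gives $\bm{z}_n\to\bm{0}$. I would, however, prefer the nilpotent choice because it sidesteps the only delicate point in the general version of the argument: when the assigned eigenvalues are repeated, passing from ``spectral radius $<1$'' to $(A-BS)^n\to O$ requires controlling the polynomial growth coming from Jordan blocks, a classical fact one would otherwise cite from Elaydi (2005). With $\Phi=\{0,0,0\}$ no such estimate is needed. Accordingly, I anticipate no real obstacle here beyond correctly invoking Propositions~\ref{con} and~\ref{E}; the substance of the result lies entirely in the prior fact that controllability is strong enough to force stabilizability, after which reaching the open unit disk is automatic.
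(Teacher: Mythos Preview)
Your argument is correct and follows the same conceptual route as the paper: controllability (Proposition~\ref{con}) together with pole placement (Proposition~\ref{E}) gives stabilizability once the assigned eigenvalues lie in the open unit disk. The difference is one of explicitness. The paper does not simply invoke Proposition~\ref{E} as a black box; it expands $\det(A-BS-\varphi I)$, matches coefficients against the elementary symmetric functions $\delta_i$ of a generic triple $\Phi$, and solves for $S=(\sigma_1,\sigma_2,\sigma_3)$ in closed form, arriving at Eq.~(\ref{S}). That explicit formula is not incidental: it is used immediately afterward to define the class $\mathcal{U}$ of admissible feedback matrices, and both Eq.~(\ref{S}) and the intermediate relations~(\ref{deltaone}) are invoked directly in the proof of Theorem~\ref{positive} (with the concrete choice $\Phi=\{1/3,1/3,1/3\}$, and again when bounding $u_2$ via $\delta_1$). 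Your nilpotent choice $\Phi=\{0,0,0\}$ is an elegant shortcut---it dispatches Proposition~\ref{stab} without any asymptotic estimate---but it does not produce the explicit feedback law the rest of the paper relies on, so in the context of the full argument the paper's computational version is doing more work.
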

  \begin{proof}
 Without loss of generality, I identify $\bm{z}_{k-1}=\bm{0}$. 
 Let $\Phi =\{\varphi_1,\varphi _2,\varphi _3\}$ denote 
 the set of complex numbers such that $|\varphi_i|<1$ for each $i=1,2,3$.
 I define $\delta_1=-(\varphi_1+\varphi_2+\varphi_3)$,
  $\delta_2=\varphi_1\varphi_2+\varphi_2\varphi_3+
 \varphi_3\varphi_1$, and $\delta_3=-\varphi_1\varphi_2\varphi_3$.
The characteristic polynomial of $A-BS$ is
\[ 
  \begin{split}
 \textup{det}\left(A-BS-\varphi  I\right)  =
  \begin{vmatrix}
    1-\mu\sigma_1-\varphi  & \lambda-\mu+\beta\mu^2-\mu \sigma_2 & \beta(\lambda-\mu)\mu-\mu \sigma_3 \\
    -\sigma_1 & \beta\mu-\sigma_2-\varphi  & \beta(\lambda-\mu)-\sigma_3 \\
    0 & 1 & -\varphi 
  \end{vmatrix} 
    = 0, \label{cha}
  \end{split}
 \] 
 which is equivalent to
\[ 
\begin{split}
  \varphi ^3+  (\mu \sigma_1+\sigma_2-\beta\mu-1)\varphi ^2 
 & + ((\lambda-\mu) \sigma_1-\sigma_2
 +\sigma_3-\beta(\lambda-\mu)+\beta\mu)\varphi\\
& -\sigma_3+\beta(\lambda-\mu)=0.
\end{split}
\] 
 Comparing the coefficients with the roots, I obtain
 \begin{equation}\label{deltaone}
   \begin{split}
     \mu \sigma_1+\sigma_2-\beta\mu-1=\delta_1 &\\
     (\lambda-\mu) \sigma_1-\sigma_2 +\sigma_3-\beta(\lambda-\mu)+\beta\mu=\delta_2& \\
     -\sigma_3+\beta(\lambda-\mu)=\delta_3.
   \end{split}
 \end{equation}
These equations yield 
\begin{equation}\label{S}
  \begin{split}
    S&=(\sigma_1,\sigma_2,\sigma_3) \\
 &= \left( \frac{1+\delta_1+\delta_2+\delta_3}{\lambda},~ 
\frac{-\mu(1+\delta_1+\delta_2+\delta_3)}{\lambda}+\delta_1+\beta\mu+1,~ 
-\delta_3+\beta(\lambda-\mu)\right).
 \end{split}
  \end{equation}
 Because $|\varphi_i|<1$ for each $i$, the controller makes the state
 $(p_n,q_n,q_{n-1})$ converge to $(0,0,0)$ by placing the orders
 followed by $u^k=-S\bm{z}_k$. 
 \end{proof}

 Let $\mathcal{U}$ denote the set of matrixes, $S$ in Eq.\,(\ref{S}),
 such that the maximum absolute value of eigenvalues of 
matrix $A-BS$ is less than 1. 
 \begin{definition}
A \textup{quick response control} is a control 
 such that if there exists $S=(\sigma_1,\sigma_2,\sigma_3)\in \mathcal{U }$  such that  for each $k\geqq 1$,
\begin{itemize}
 \item $\forall l\leqq k~~q_l=0 \Rightarrow u_{k+1}=0$.
\item $q_k\neq 0 \Rightarrow 
 \forall n\geqq k+1~~u_{n}=-S\bm{z}_n=-\sigma_1p_{n-1}-\sigma_2q_{n-1}-\sigma_3q_{n-2}$.
\end{itemize}
\end{definition}
 In other words, a quick response control is a control such that (i) it achieves market stabilization 
and (ii) the controller enters the market
 \textit{only if}
 the speculator trades; otherwise, the controller never enters 
  the market (i.e., no market intervention occurs).
 The market is said to be \textit{SPE-viable under quick response controls} 
if, for each $(\lambda,\mu)\in \mathbb{R}^2_{++}$, 
there exists a quick response control such that no trade is a
unique SPE outcome.

 \begin{theorem} \label{positive}
   The market is SPE-viable under quick response controls 
if and only if $(\lambda,\mu)\in M$.
  \end{theorem}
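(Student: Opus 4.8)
The plan is to anchor everything to the benchmark of Proposition~\ref{HS} and to the fact that, since $\lambda>0$ on all of $\mathbb{R}^2_{++}$, Proposition~\ref{stab} makes a stabilizing feedback available for free; by Proposition~\ref{E} the controller may place the three closed-loop eigenvalues anywhere in the open unit disk, so a quick response control is parameterized by $(\delta_1,\delta_2,\delta_3)$ through Eq.~(\ref{S}). The first step is a single clean computation. Starting from the no-trade path, let one speculator deviate to $(1,-1)$ with the others passive; the control switches on one period later, and substituting Eq.~(\ref{S}) into Eqs.~(\ref{price})--(\ref{quantity}) makes the momentum contribution cancel identically, leaving the deviation payoff
\[
p_{n+1}-p_n=\lambda-3\mu-\mu\delta_1,
\]
independent of $n$, of $\beta$, and of $(\delta_2,\delta_3)$; the mirror deviation $(-1,1)$ gives the same value, and if the next speculator ``supports'' (buys to cancel the sale) the payoff rises by exactly $\mu$ to $\lambda-2\mu-\mu\delta_1$. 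Thus the controller's only relevant lever against a fresh-start deviation is $\delta_1$, and the choice with $\delta_1=-1$ (e.g.\ $\varphi_i=\tfrac13$) reproduces precisely the benchmark payoff $\lambda-2\mu$ of Proposition~\ref{HS}.

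For sufficiency I assume $\lambda\le2\mu$ and exhibit a quick response control for which no-trade is the unique SPE. I would fix $\delta_1$ so that $\lambda-2\mu-\mu\delta_1\le 0$ (any $\delta_1\ge0$ will do), which already makes every fresh-start deviation---supported or not---unprofitable, so no-trade is an SPE. The work is uniqueness. Since in any trading profile some speculator must trade first and hence faces the fresh-start payoff $\le0$, it suffices to rule out a speculator profitably \emph{riding the price drift} left by a predecessor. I would use the two remaining degrees of freedom $(\delta_2,\delta_3)$ to return the post-deviation state to the no-trade equilibrium in finitely many steps (a deadbeat choice on the stable coordinates), so that the drift $C$ inherited by any later speculator is driven to zero; with zero drift and slope $\lambda-3\mu-\mu\delta_1\le0$, no successor can profit either, and the candidate chain collapses from its end by backward induction, leaving no-trade as the unique SPE.

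For necessity I assume $\lambda>2\mu$ and argue that no admissible $S$ works. No-trade can still be made an SPE (the fresh-start payoff remains deterrable), so the failure must be loss of uniqueness. The idea is that suppressing the inherited drift for \emph{every} successor---the only device that blocks drift-riding---pins the spectrum to the configuration making the induced game coincide with the benchmark, namely $\delta_1=-1$; but then the fresh-start payoff equals $\lambda-2\mu>0$, so no-trade is not even an SPE under the one control that could have delivered uniqueness. For any other admissible $S$ the residual drift $C$ is nonzero, and I would show a successor harvests it through the linear term in its payoff $Cb+b^2(\lambda-3\mu-\mu\delta_1)$, producing a self-enforcing support chain and a second SPE. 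Either branch destroys viability, so $\lambda>2\mu$ is incompatible with SPE-viability under quick response controls.

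The main obstacle is the uniqueness half on both sides: a single feedback matrix $S$ must simultaneously govern the incentives of infinitely many overlapping two-period speculators whose payoffs are coupled through support buying. The decisive technical step is to track the residual drift $C_K$ that the stabilizer leaves after a length-$K$ chain and to show, uniformly in $K$, that it can be annihilated when $\lambda\le2\mu$ but cannot be eliminated without forcing $\delta_1=-1$ when $\lambda>2\mu$. This is exactly where the three-dimensional eigenvalue-placement freedom of Proposition~\ref{E} is exhausted, and it is the sign change of the fresh-start payoff at $\delta_1=-1$ that transmits the benchmark threshold $\lambda=2\mu$ of Proposition~\ref{HS} into the controlled game.
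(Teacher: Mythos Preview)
Your opening computation is right and is exactly the engine the paper uses: with the stabilizing feedback of Eq.~(\ref{S}), a solo fresh-start deviation earns $p_{n+1}-p_n=\lambda-3\mu-\mu\delta_1$, and support buying by the next speculator adds $\mu$.

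For sufficiency your plan diverges from the paper and is only sketched. The paper does not try to pick $\delta_1\ge 0$ and then kill drift with $(\delta_2,\delta_3)$; instead it fixes the concrete spectrum $\Phi=\{1/3,1/3,1/3\}$ (so $\delta_1=-1$) and does a two-step backward induction: it computes $\pi_2(s\mid 1)=\mu(x_3^1-4/3)<0$ when $x_2=(1,-1)$, so in any SPE speculator~2 refuses to support, whence $\pi_1(s\mid 1)=\lambda-2\mu+\mu x_2^1\le\lambda-2\mu\le0$. Your deadbeat/drift-elimination idea may be salvageable, but you have not verified it; in particular, with all eigenvalues at $0$ one finds $\pi_2(s\mid 1)=\mu x_3^1-\lambda$, which is \emph{positive} when $\lambda<\mu$ and $x_3^1=1$, so your claim that ``no successor can profit'' does not follow from drift elimination alone and the chain has to be unraveled from a later speculator, exactly the kind of work the paper's three-period computation does explicitly.

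The genuine gap is in necessity. Your premise ``No-trade can still be made an SPE (the fresh-start payoff remains deterrable)'' is false. Admissibility $|\varphi_i|<1$ forces $\delta_1=-(\varphi_1+\varphi_2+\varphi_3)\in(-3,3)$, so the solo fresh-start payoff satisfies
\[
\lambda-3\mu-\mu\delta_1>\lambda-6\mu,
\]
which is strictly positive whenever $\lambda>6\mu$; for such $(\lambda,\mu)\notin M$ no quick response control can make no-trade even a Nash equilibrium, let alone the unique SPE. This spectral bound is precisely what the paper exploits (it writes the condition as $(\lambda-2\mu)/\mu-1\le\delta_1<3$ and lets $\mu$ be small). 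Your alternative argument---that suppressing drift for every successor ``pins'' $\delta_1=-1$---is asserted, not proved; nothing in Proposition~\ref{E} or Eq.~(\ref{S}) links drift cancellation to a unique value of $\delta_1$, and the payoff formula $Cb+b^2(\lambda-3\mu-\mu\delta_1)$ you invoke uses a variable $b$ outside the action set $\{-1,0,1\}$. Replace your necessity with the $\delta_1<3$ obstruction and the argument goes through along the paper's lines.
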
 
 \begin{proof}
I assume that $p_0=0$ for simplicity. 

Suppose that $(\lambda,\mu)\in M$. 
I first show that there is a quick response control 
that makes the market SPE-viable. 
    Let  $\Phi =\{1/3, 1/3, 1/3\}$.
Then, by Eq.\,(\ref{S}), I have 
 \[ 
    u_{n+1}=-\frac{1}{\lambda}\frac{8}{27}p_n
    +\frac{\mu}{\lambda}\left(\frac{8}{27}-\beta\lambda\right)q_n 
    -\left(\frac{1}{27}+\beta(\lambda-\mu) \right)q_{n-1}.
  \]
Let $\Gamma(x)$ denote a subgame such that a speculator $n$ chooses $x_n=(x,-x)$ and each speculator $i<n$ chooses $x_i=(0,0)$. 
  Suppose that speculator 1 implements $x_1=(1,-1)$. 
 Then,
  $u_2=-\beta\mu$ and $q_2=-1+{x}^1_2$.
 These lead to
 \[ 
    \begin{split}
       p_2 &=\lambda-\mu+\mu {x}^1_2 \\
      \xi_3&=\beta(\lambda-2\mu+\mu{x}^1_2)\\
       u_3 &=-\frac{1}{3}-\xi_3\\
        p_3 &=\mu \left( {x}^1_3-\frac{1}{3} \right) +(\lambda-\mu){x}^1_2.\\
       \end{split}
  \] 
If $x^1_2\neq 0$, a simple calculation yields
 \[
 \begin{split}
 \pi_2(s \mid 1)&=
\left(
(\lambda-2\mu){x}^1_2+\mu{x}^1_3-\lambda+\frac{2\mu}{3}
\right)x^1_2\\
&=
  \begin{cases}
     \mu\left(x^1_3-\frac{4}{3}\right) <0 & \text{if $x^1_2=1$} \\
  \mu\left(-x^1_3-\frac{8}{3}\right)+2\lambda& \text{if $x^1_2=-1$.}
  \end{cases}
  \end{split}
 \]
Because $\pi_2(s\mid 1)=0$ if $x_2=(0,0)$, any SPE makes 
 $x_2\neq (1,-1)$. 
  Then, $\pi_1(s\mid 1)=\lambda-2\mu+\mu{x}^1_2$ with 
${x}^1_2\in \{-1,0\}$.
Because $\lambda\leqq 2\mu$, $\pi_1(s\mid 1)\leqq 0$.  
The argument is symmetric in the case of $x_1=(-1,1)$. %
Hence, it is optimal for each speculator $n$ to choose $x_n=(0,0)$ in $\Gamma(0)$.

Next, I show that the market is not SPE-viable under any quick 
response control for some $(\lambda,\mu)\not\in M$. 
Let $\lambda>2\mu$. 
Consider a subgame in which only speculator 1 chooses $x_1=(1,-1)$
 and other speculators choose $(0,0)$. 
In this case, by Eq.\,(\ref{deltaone}), $u_2=-\sigma_1p_1-\sigma_2q_1
-\sigma_3q_0=-(\delta_1+\beta\mu+1)$. 
Because $(\sigma_1,\sigma_2,\sigma_3)\in \mathcal{U}$, 
we must have $\delta_1<3$. 
To achieve SPE-viability, we must have
\begin{equation}\label{fu}
p_2-p_1=(\lambda-2\mu)-\mu(1+\delta_1)\leqq 0.
\end{equation}
Eq.\,(\ref{fu}) is equivalent to 
\[
\frac{\lambda-2\mu}{\mu}-1\leqq \delta_1<3,
\]
which is impossible for a sufficiently small $\mu>0$.
   \end{proof}

 \section{Concluding remarks}
 This study models a trading system as a linear system and introduces 
a control to the system. The control affirms the prevention of
 ``momentum-ignition'' price manipulation (MIPM). 
The controllability result of the linear model 
 is applicable to other financial policies, such as inflation targeting,
 the stabilization of price bubbles, and the prevention of herding during 
 a market crash. 
The implication of this study is that it is important for the controllability 
of the market to check whether the observed price impacts in practice 
are linear or not .

 The main finding of this study is that Huberman and Stanzl's (2004) 
 benchmark result is achievable with an appropriate control, while it is
 never achievable without controls. 
The market intervention of the model, which is identical to
a control, never destabilizes the market
given a strategy profile of speculators.

The main assumption of this study (Assumption \ref{A}) is 
important to  the affirmative result on the prevention of MIPM.  
For the restriction of speculators' action spaces, the model suggests 
that the result is valid under a more general 
class of strategies. 
I briefly state my conjecture. For a strategy profile $s$ that generates a 
sequence of active speculators' aggregate quantity, $(y_n)$, 
the controller can set a control $(u'_n)$ such that 
$\bm{u}_n=u'_{n+1}-y_{n+1}=u_{n+1}$ in Eq.\,(\ref{u}). 
Hence, the result still affirms the prevention of MIPM. 
Meanwhile, the simplicity of momentum traders' behavior and time-independent pricing rules are controversial.  

Finally, my model has certain limitations: there are no stochastic terms, 
 long-lived speculators, or budget constraints of the controller. 
In particular, we should consider the controller's budget constraints
 more seriously. 
Since the controller's purchase (or sale) continues infinitely for market
 stability, it is important to show that the required budget is bounded. 
These topics are left for future research.

 \appendix
\section*{Appendix}

 \appendix
  \renewcommand{\theequation}{A.\arabic{equation}}

 \section{Second-order linear difference equation}
 A difference equation 
 appearing in the model is defined by 
  \begin{equation} 
    q_{n+2}=Kq_{n+1}+Jq_{n}, \label{diff}
  \end{equation}  
 where $K,J\in \mathbb{R}$ and $n\in \mathbb{N}$. 
The \textit{characteristic equation}
of Eq.\,(\ref{diff}) is
 \begin{equation}
   r^2-Kr-J=0. \label{char}
 \end{equation} 

Suppose that  Eq.\,(\ref{char}) has distinct characteristic roots, say $r_1$ and  $r_2$. 
Then, a sequence $(q_n)$ 
described as $q_n=c_1r_1^n+c_2r_2^n$ 
   is the solution of  Eq.\,(\ref{diff}) because
  \begin{equation}\label{yarikata}
   \begin{split}
            &q_{n+2}-Kq_{n+1}-Jq_n \\
            &= c_1r_1^{n+2}+c_2r_2^{n+2}
              -K(c_1r_1^{n+1}+c_2r_2^{n+1})
              -J(c_1r_1^n+c_2r_2^n) \\
            &= c_1r_1^n(r_1^2-Kr_1-J)
               +c_2r_2^n(r_2^2-Kr_2-J)\\
            &= 0.        
   \end{split}
 \end{equation}
  This solution is uniquely determined by an initial value of 
 Eq.\,(\ref{diff}).
 If $(q_0, q_1)=(0,1)$, then $c_1$ and $c_2$ are determined uniquely as
 \[ 
    \begin{pmatrix}
      1         & 1 \\
      r_1 & r_2
    \end{pmatrix}
     \begin{pmatrix}
      c_1       \\
      c_2      \\
    \end{pmatrix}
    =
     \begin{pmatrix}
      0 \\
      1
    \end{pmatrix}.
  \] 
  
  If  the roots of Eq.\,(\ref{char}) are real numbers, then 
  \begin{equation}\label{r}
    r_1=\frac{K+\sqrt{D}}{2},~~~ r_2=\frac{K-\sqrt{D}}{2},
  \end{equation}
    where  $D=K^2+4J>0$.
    Hence, $c_1=1/\sqrt{D}$, $c_2=-1/\sqrt{D}$,
  and 
  \begin{equation}\label{r2}
  q_n=\frac{1}{\sqrt{D}}(r^n_1-r^n_2).
  \end{equation} 

If the roots of Eq.\,(\ref{char}) are complex numbers, then
  \begin{equation}\label{iphi}
    r_1=\frac{K+i\sqrt{D'}}{2},~~~ r_2=\frac{K-i\sqrt{D'}}{2},
  \end{equation}
  where $D'=-D>0$ and $i=\sqrt{-1}$.
  Then, $q_n=(r^n_1-r^n_2)/(i\sqrt{D'})$.
   Note that $D<0$ implies $J<0$. 
 In the polar form,  $r_1=(K+i\sqrt{D'})\slash{2}=\sqrt{-J}(\cos\theta+i\sin\theta)$ with some $\theta
 \in [0, 2\pi]$.\footnote{Suppose that $r_1=\alpha+i\beta$ and $r_2=\alpha-i\beta$.
 In polar coordinates, $\alpha=r\cos \theta, \beta=r\sin \theta, r=\sqrt{\alpha^2+\beta^2}$, and 
 $\theta=\tan^{-1}(\frac{\beta}{\alpha})$. }
 Using Euler's formula, $r_1=\sqrt{-J}e^{i\theta}$ and 
  $r_2=\overline{r}_1=(K-i\sqrt{D'})/2=\sqrt{-J}e^{-i\theta}$.
 Using De Moivre's theorem, $q_n$ is described as 
  \begin{equation}
    {q}_n=\frac{2}{\sqrt{D'}}\left(\sqrt{-J}\right)^n \sin(n\theta).  \label{iq}
\end{equation} 

Suppose that  Eq.\,(\ref{char}) has the same characteristic root.
 Then, $D=0$ and $r=K/2$. 
  In this case, a sequence $(q_n)$, described as $q_n=r^n(c_1+nc_2)$, is the 
  solution of Eq.\,(\ref{diff}).
 If $(q_0, q_1)=(0,1)$, then $(c_1,c_2)=(0, 2/K)$. 
 Hence, 
 \begin{equation}\label{r3}
 q_n=n\left(\frac{K}{2}\right)^{n-1}.
 \end{equation} 

\subsection{Analyses of the model}
 Suppose that  all speculators implement ${x}=(1,-1)$
 (the argument is symmetric for the case $-{x}$ if $p_0$ is sufficiently large).
 Then, the market orders are $q_0=0, q_1=1$, and 
      \begin{equation}
        q_n  =\beta\mu q_{n-1}+\beta(\lambda-\mu)q_{n-2}. \label{q}
     \end{equation}   
     for each $n\geqq 2$. 
 Let $K=\beta\mu$, $J=\beta(\lambda-\mu)$, and $D=K^2+4J$.

\begin{lemma}\label{+}
 If $(\lambda, \mu)$ satisfies $D\geqq 0$, then the sequence of the market price, $(p_n)$,
 is monotone increasing.
\end{lemma}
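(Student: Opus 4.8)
The plan is to reduce the monotonicity of $(p_n)$ to the positivity of the order sequence $(q_n)$, and then to read off the sign of each $q_n$ from the closed-form solutions recorded in (\ref{r2}) and (\ref{r3}). Throughout I use that momentum traders are active, so $\beta>0$ and hence $K=\beta\mu>0$.

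First I would establish the identity $p_{n+1}-p_n=q_{n+2}/\beta$ for every $n\geq 0$. For $n\geq 1$ this follows by writing the price increment $p_{n+1}-p_n=(\lambda-\mu)q_n+\mu q_{n+1}$ (from (\ref{single}) and Assumption \ref{A}-3), multiplying by $\beta$, and comparing with (\ref{q}) evaluated at index $n+2$; for $n=0$ it holds by the direct computation $p_1-p_0=\mu q_1=\mu$ together with $q_2=\beta\mu$. Since $\beta>0$, this identity shows that $(p_n)$ is strictly increasing if and only if $q_m>0$ for every $m\geq 2$. Thus the lemma is equivalent to the positivity of $q_2,q_3,\dots$, which I prove from the explicit solution of (\ref{q}).

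For the positivity I would split according to whether the characteristic equation (\ref{char}) has distinct or repeated roots. When $D>0$, the roots $r_1,r_2$ in (\ref{r}) are real and distinct, and (\ref{r2}) gives $q_n=(r_1^n-r_2^n)/\sqrt{D}$, so that $q_n>0$ is equivalent to $r_1^n>r_2^n$. Here $r_1=(K+\sqrt{D})/2>0$ because $K>0$, while the sign of $r_2=(K-\sqrt{D})/2$ agrees with that of $-J=-\beta(\lambda-\mu)$. If $\lambda\leq\mu$, then $0\leq r_2<r_1$ and $r_1^n>r_2^n$ is immediate; if $\lambda>\mu$, then $r_2<0$, and the key inequality $|r_1|-|r_2|=K>0$ (read off from (\ref{r})) gives $r_1^n>|r_2|^n=r_2^n$ for even $n$ and $r_1^n>0>r_2^n$ for odd $n$. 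In both subcases $q_n>0$. When $D=0$, the repeated root is $r=K/2>0$, and (\ref{r3}) gives $q_n=n(K/2)^{n-1}>0$ for all $n\geq 1$.

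I expect the only delicate step to be the case $\lambda>\mu$ (equivalently $J>0$), where $r_2<0$ and $r_2^n$ alternates in sign, so that $r_1>r_2$ alone does not yield $r_1^n>r_2^n$. The crux is the elementary identity $|r_1|-|r_2|=K>0$ obtained from (\ref{r}), which guarantees the dominance of $r_1^n$ over $r_2^n$ for every parity of $n$; everything else is bookkeeping with the formulas already established in this Appendix.
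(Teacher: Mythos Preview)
Your proof is correct and follows essentially the same route as the paper: establish $q_n>0$ from the closed-form solutions (\ref{r2}) and (\ref{r3}), then translate this into monotonicity of $(p_n)$ via the price-increment formula. The paper obtains $|r_1|>|r_2|$ from the identity $|r_1|^2-|r_2|^2=K\sqrt{D}$ rather than your $|r_1|-|r_2|=K$, and it leaves the link $p_{n+1}-p_n=q_{n+2}/\beta$ implicit in its final line, but these are only presentational differences.
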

\begin{proof}
Suppose that $D>0$.
By Eqs.\,(\ref{r}) and (\ref{r2}), 
    \[ 
    \begin{split}
      q_n=\frac{1}{\sqrt{D}}(r^n_1-r^n_2)
      &=\frac{1}{\sqrt{D}}
\left\{ \left(\frac{K+\sqrt{D}}{2} \right)^n - 
     \left(\frac{K-\sqrt{D}}{2} \right)^n \right\}\\
     &=\frac{r^n}{\sqrt{D}}\left\{1-\left(\frac{r_2}{r_1}\right)^n\right\}, 
       \end{split}
     \] 
The assumption $D>0$ implies $\beta>0$, $K>0$, and $r_1>0$.
Then,  $|r_1|^2-|r_2|^2=K\sqrt{D}>0$, which implies 
 $|r_1|> |r_2|$ and  $q_n>0$ for each $n\in \mathbb{N}$.
   Suppose that $D=0$. 
   By Eq.\,(\ref{r3}), $q_n=n(\frac{\beta\mu}{2})^{n-1}$.
   If $\beta>0$, then $q_n>0$ for each $n\in \mathbb{N}$. 
 By Eq.\,(\ref{single}), $p_n=p_{n-1}+\lambda q_{n-1}+\mu(q_n-q_{n-1})>p_{n-1}$.
\end{proof}

For market viability, $D<0$ is necessary.
Eq.\,(\ref{iq}) implies that the sequence $(q_n)$ oscillates and there exist 
 infinitely many $n$ such that $q_n<0$.

\section{Theorem 10.\,4.~of Elaydi (2005)}
Consider the following difference equation system:
\begin{equation}\label{L}
\bm{z}_{n+1}=A\bm{z}_n+B\bm{u}_n,
\end{equation}
where $\bm{z}$ is a $k$-dimensional vector, 
$A$ is a $(k\times k)$-matrix, 
$B$ is a $(k\times m)$-matrix, and $\bm{u}$ is an 
$m$-dimensional vector, where $m\leqq k$.  
The \textit{controllability matrix} $W$ of System (\ref{L})
 is defined as the $k\times km$-matrix 
\[
W=[B,AB,A^2B,\ldots,A^{k-1}B]. 
\]
Theorem 10.\,4.~of Elaydi (2005) states that 
System (\ref{L}) is controllable if and only if 
$\textup{rank}~W=k$. 
In the present model, $k=3$ and $m=1$.

\section{Proof of Propositions}
\subsection{Proof of Proposition \ref{D<0}}

Suppose that $D\geqq 0$.
Then, Lemma \ref{+} implies that each speculator $n$ implementing $x_n=(1,-1)$
constitutes an NE, which is a contradiction.

\subsection{Proof of Proposition \ref{s}}
 
I prove this proposition by showing the first trading speculator's optimal choice is 
no trade in any SPE. 
Let $\Gamma(x)$ denote the subgame beginning at a history such that 
a speculator $n$ chooses $x_n=(x,-x)$ and each speculator $i<n$ chooses 
$x_i=(0,0)$.

 Consider an arbitrary strategy profile $s$ such that ${x}_1=(1,-1)$.
Then, speculator 1's payoff in $\Gamma(1)$ is 
\[
\pi_1(s \mid 1) = \mu(\beta\mu+x_2^1-1)+(\lambda-\mu).
\]
Speculator 2's payoff in $\Gamma(1)$ is 
 \[
  \pi_2(s \mid 1)= 
\begin{cases}
\mu \xi_3+(\lambda-\mu)\xi_2 +\mu y_3+(\lambda-\mu)y_2
&\text{if $x_2\neq (0,0)$} \\
0& \text{if $x_2=(0,0)$,} 
\end{cases}
 \] 
 where $\xi_2=\beta\mu$,
 $\xi_3=(\beta\mu)^2+(y_2-1)\beta\mu+\beta\lambda$, 
 $y_2=-1+x_2^1$, and $y_3=x_3^1-x_2^1$. 
Then, we have
\[ \begin{split}
  \mu \xi_3+(\lambda-\mu)\xi_2&=\beta\mu(\beta\mu^2+(x^1_2-3)\mu+2\lambda)\\
  \end{split}
 \] 
 and
 \[
 \begin{split}
   \mu y_3+(\lambda-\mu)y_2&=\mu(x^1_3-x_2^1)+(\lambda-\mu)(-1+x^1_2).\\
 \end{split}
 \] 
I calculate $\pi_1(s \mid 1)$ and $\pi_2(s \mid 1)$. 
\begin{itemize}
 \item If $x_2=(1,-1)$, then
 \[
\begin{split}
 \pi_1(s\mid 1)&=\beta\mu^2-\mu+\lambda \\
 \pi_2(s\mid 1)&=\beta\mu(\beta\mu^2-2\mu+2\lambda)+\mu(x^1_3-1).\\
\end{split}
\]
\item If $x_2=(0,0)$, then
 \[
\begin{split}
 \pi_1(s\mid 1)&=\beta\mu^2-2\mu+\lambda \\
 \pi_2(s\mid 1)&=0.
\end{split}
\]
\item If $x_2=(-1,1)$, then
 \[
\begin{split}
 \pi_1(s\mid 1)&=\beta\mu^2-3\mu+\lambda \\
 \pi_2(s\mid 1)&=\beta\mu(\beta\mu^2-4\mu+2\lambda)+\mu(x^1_3+1)
-2(\lambda-\mu).\\
\end{split}
\]
\end{itemize}
I further calculate $\pi_2(s \mid 1)$.
\begin{itemize}
 \item If $x_3=(1,-1)$, then
 \begin{equation}\label{3_1}
\pi_2(s\mid 1)=
\begin{cases}
  \beta\mu(\beta\mu^2-2\mu+2\lambda) & \text{if $x_2=(1,-1)$}\\
0 & \text{if $x_2=(0,0)$}\\
 \beta\mu(\beta\mu^2-4\mu+2\lambda)+4\mu
-2\lambda & \text{if $x_2=(-1,1)$.}
\end{cases}
\end{equation}
\item If $x_3=(0,0)$, then
 \begin{equation}\label{3_2}
\pi_2(s\mid 1)=
\begin{cases}
  \beta\mu(\beta\mu^2-2\mu+2\lambda)-\mu & \text{if $x_2=(1,-1)$}\\
0 & \text{if $x_2=(0,0)$}\\
 \beta\mu(\beta\mu^2-4\mu+2\lambda)+3\mu-2\lambda & \text{if $x_2=(-1,1)$.}
\end{cases}
\end{equation}
\item If $x_3=(-1,1)$, then
 \begin{equation}\label{3_3}
\pi_2(s\mid 1)=
\begin{cases}
  \beta\mu(\beta\mu^2-2\mu+2\lambda)-2\mu & \text{if $x_2=(1,-1)$}\\
 0 & \text{if $x_2=(0,0)$}\\
 \beta\mu(\beta\mu^2-4\mu+2\lambda)+2\mu-2\lambda & \text{if $x_2=(-1,1)$.}
\end{cases}
\end{equation}
\end{itemize}
These observations show that speculator 2 never chooses
 $x_2=(1,-1)$ if $L=\beta\mu^2-2\mu+2\lambda<0$. 
In this case, $\max \pi_1(s\mid 1)=\beta\mu^2-2\mu+\lambda<0$. 
The argument is symmetric for  the case of  $x_1=(-1,1)$:
speculator 2 never chooses $x_2=(-1,1)$ in $\Gamma(-1)$ and 
speculator 1's maximum payoff in $\Gamma(-1)$ is negative.
The optimal choice of speculator 1 is, therefore,  no trade: $x_1=(0,0)$. 
This argument is applicable recursively: if $L<0$,
each speculator $n$ in the subgames $\Gamma(1)$ and $\Gamma(-1)$
 gains a negative payoff, given speculator $n+1$'s optimal actions,
 regardless of the other speculators' strategy profiles in $\Gamma(1)$ and
 $\Gamma(-1)$. 
Any SPE must have this property. 
This argument also shows that, if  $L<0$, 
 there is an SPE in which speculator 1 chooses $x_1=(0,0)$ and 
speculator $n+1$ chooses $x_{n+1}=(0,0)$ in $\Gamma(0)$, 
for each $n\geqq 1$. $\blacksquare$

The converse of Proposition \ref{s} is not true. 
Obviously, if $\lambda=\mu=0$, market prices never change.
Hence, no trade occurs in any SPE. 
Even if $\lambda\mu\neq 0$, when $L=0$, 
the bottom row of Eq.\,(\ref{3_1}) is $-2(\beta\mu^2-2\mu+\lambda)>0$. 
Furthermore, Eqs.\,(\ref{3_2}) and (\ref{3_3}) show that speculator 2 never
chooses $x_2=(1,-1)$ in $\Gamma(1)$. 
By continuity, no trade occurs in an SPE for some $(\lambda,\mu)$
such that $L\geqq 0$.  

\subsection*{Acknowledgements}
The author acknowledges the many helpful suggestions of 
seminar participants during the preparation of the paper.
The author would like to thank Editage (www.editage.com) for English language editing.

 \section*{References}
\begin{enumerate} 
\item Almgren,\,R., and Chriss,\,N., 2000. Optimal execution of portfolio transactions. 
{J. Risk} {3}, 5--39.
\item Baltzer,\,M., Jank,\,S., and Smajlbegovic,\, E., 2019.
Who trades on momentum? 
{J. Financ. Market} {42}, 56--74.   
 \item De Long,\,J.\,B., Shleifer,\,A., Summers,\,L.\,H., and Waldmann,\,R.\,J., 1990.
  Positive feedback investment strategies and destabilizing rational speculation. 
{J. Finance} {45}, 379--395.
 \item Elaydi,\,S., 2005. \textit{An Introduction to Difference Equations}. Springer, New York.
\item Hong,\,H., and Stein,\,J.\,C., 1999. A unified theory of underreaction, momentum trading, and overreaction in asset markets.
{J. Finance} {54}, 2143--2184. 
 \item Huberman,\,G., and Stanzl,\,W., 2004. Price manipulation and quasi-arbitrage.
  {Econometrica} {72}, 1247--1275.
\item Jegadeesh,\,N., and Titman,\,S., 1993. Returns to buying winners and selling losers: implications for stock market efficiency.
{J. Finance} {48}, 65--91. 
 \item Kyle,\,A. 1985. Continuous auctions and insider trading. 
 {Econometrica} {53}, 1315--1335. 
\item Moskowitz,\,T.\,J., Ooi,\,Y.\,H., and Pedersen,\,L.\,H., 2012.
Time series momentum. {J. Financ. Econ.} {104}, 228--250. 
\item Ohashi, Y., 2018. Momentum ignition price manipulation and pricing mechanisms. 
SSRN discussion paper. https://dx.doi.org/10.2139/ssrn.3136824

\end{enumerate}

\end{document}